\documentclass[runningheads,a4paper]{llncs}

\setcounter{tocdepth}{3}

\usepackage[utf8]{inputenc}
\usepackage[tbtags]{amsmath}
\usepackage[subnum]{cases}
\usepackage{amssymb}

\usepackage{amsthm,hyperref}
\usepackage{nomencl}
\usepackage{enumitem}
\usepackage{graphicx,wasysym,color,subfigure,lscape, threeparttable, textgreek}
\usepackage[a4paper]{geometry}

\usepackage{etoolbox}
\usepackage{multirow}
\usepackage{booktabs}
\usepackage{rotating}
\usepackage{algorithm,algorithmic}
\usepackage[algo2e,vlined,ruled]{algorithm2e}
\usepackage{tikz}
\newcommand{\Mypm}{\mathbin{\tikz [x=1.4ex,y=1.4ex,line width=.1ex] \draw (0.0,0) -- (1.0,0) (0.5,0.08) -- (0.5,0.92) (0.0,0.5) -- (1.0,0.5);}}%

\newtheorem{thm}{Theorem}[section]
\newtheorem{lem}[thm]{Lemma}
\usepackage[utf8]{inputenc}

\usepackage{url}

\urldef{\mailsa}\path|{harsha, emreyamangil, rbent}@lanl.gov|
\urldef{\mailsb}\path|,mlu87@g.clemson.edu|
\newcommand{\keywords}[1]{\par\addvspace\baselineskip
\noindent\keywordname\enspace\ignorespaces#1}



\begin{document}
\mainmatter
\title{Tightening McCormick Relaxations for Nonlinear Programs via Dynamic Multivariate Partitioning}

\titlerunning{Dynamic McCormick partitioning with bound tightening}

%
%
\author{Harsha Nagarajan$^{\dag}$ \and Mowen Lu$^\ddag$ \and  Emre Yamangil$^\dag$ \and Russell Bent$^{\dag}$}
\authorrunning{Dynamic McCormick partitioning with bound tightening}


\institute{$^{\dag}$ Center for Nonlinear Studies, Los Alamos National Laboratory, NM, USA\\
$^{\ddag}$ Department of Industrial Engineering, Clemson University, SC, USA\\
\mailsa
\mailsb}

%
%


\maketitle

\begin{abstract}
In this work, we propose a two-stage approach to strengthen piecewise McCormick relaxations for mixed-integer nonlinear programs (MINLP) with multi-linear terms. In the first stage, we exploit Constraint Programing (CP) techniques to contract the variable bounds. In the second stage we partition the variables domains using a dynamic multivariate partitioning scheme. Instead of equally partitioning the domains of variables appearing in multi-linear terms, we construct sparser partitions yet tighter relaxations by iteratively partitioning the variable domains in regions of interest. This approach decouples the number of partitions from the size of the variable domains, leads to a significant reduction in computation time, and limits the number of binary variables that are introduced by the partitioning.  We demonstrate the performance of our algorithm on well-known benchmark problems from MINLPLIB and discuss the computational benefits of CP-based bound tightening procedures. 
\end{abstract}

\keywords{McCormick relaxations, MINLP, dynamic partitioning, bound tightening}

\section{Introduction}
Mixed Integer Nonlinear Programs (MINLPs) are part of a class of non-convex, mathematical programs that include discrete variables and nonlinear terms in the objective function and/or constraints. Within many application domains, MINLPs with multi-linear, non-convex terms are of great interest. For example, these problems appear in chemical engineering (synthesis of process/water networks) \cite{meyer2006global,ryoo1995global}, energy infrastructure networks \cite{coffrin2015strengthening}, and in the molecular distance geometry problem \cite{liberti2008branch}. 
Despite their importance in such areas, these problems remain difficult to solve.
Global optimization solvers, like BARON \cite{sahinidis1996baron}, depend heavily on the quality of mixed-integer linear programing relaxations to MINLPs. 
However, these relaxations are often weak and the solvers are not guaranteed to converge to a global optimum or even find a feasible solution. 
As a result, there is considerable interest in developing tighter relaxations that improve the convergence of global solvers. In this paper we focus on MINLPs with multi-linear terms, though the approach is generalizable.

In the context of this paper, there are two key methods for deriving tight relaxations of MINLPs with multi-linear terms. First, variable bounds are a critical contributor to the quality of relaxations. As a result, bound tightening methods have received a great deal of attention, in particular for problems with bilinear terms \cite{castro2015tightening,belotti2012feasibility,castro2015normalized,faria2011novel,mouret2009tightening}. 
In most of these papers, the most common approaches solve sequences of minimization and maximization problems where the continuous variables are the objective.  The solutions to these problems are used to tighten the bounds of the variables. In this paper, we combine these bound tightening approaches with constraint programming to improve their effectiveness.
The second method for tightening relaxations focuses on reducing the size of the relaxed feasible space. This is often done with domain partitioning, i.e. spatial branch-and-bound (sBB). In sBB, a single variable (often the variable that violates the feasible region the most) is iteratively partioned within a branch-and-bound search tree \cite{tawarmalani2005polyhedral,smith1999symbolic}.
One of the crucial requirements of successful sBB is tight lower bounds on the objective. These bounds support efficient pruning of infeasible regions and some of the most effective bounds are those that are based on relaxations.
When multi-linear terms are involved, a commonly used method is McCormick relaxations. As McCormick relaxations tend to be loose in many situations, the literature contains many efforts to improve these relaxations. The method most closely related to our own builds uniform piecewise McCormick relaxations via univariate or bivariate partitioning \cite{castro2015tightening,hasan2010piecewise,karuppiah2006global}.
One of the drawbacks of such approaches is that they may need a large number of partitions that are controlled by on/off binary variables. As these binary variables introduce combinatorial inefficiencies, this approach is often restricted to small problems.
To address this issue, there has been recent work focusing on addressing these inefficiencies.
For example, \cite{castro2015normalized,castro2014optimality} combines multiparametric disaggregation with optimality-based bound tightening methods. In \cite{wicaksono2008piecewise}, the authors discuss a non-uniform, bivariate partitioning approach that improves relaxations but provide results for a single, simple benchmark problem. More recently, in \cite{hasan2010piecewise}, the authors report the advantages of bivariate (as compared to univariate) partitioning, however they use partitions chosen at uniformly located grid points. In the context of multi-linear terms, \cite{teles2013univariate} discusses a univariate parametrization method that solves medium-sized benchmarks. However, none of these approaches address the key limitation of uniform partitioning, \emph{partition density}, i.e. these methods introduce partitions in unproductive regions of the search space.   
We address this limitation by introducing an approach that dynamically partitions the relaxations in regions of the search space that favor optimality.
To the best of our knowledge, there is little or no work on methods for solving MINLPs with multi-linear terms with such sparse partitioning. 

To summarize, we address the problem of tight relaxations for non-convex multi-linear functions and we develop a two-stage algorithm that strengthens piecewise multi-linear relaxations. In the first stage, we apply a sequential, CP inspired, bound tightening approach. In the second stage, we develop a dynamic, sparse multivariate partitioning approach that addresses the key limitations of uniform partitioning approaches. 
With this algorithm, we are able to solve many MINLPs more efficiently and accurately with fewer parameter tuning options than the existing approaches. 
The remainder of this paper is organized as follows: Section \ref{sec:problem} discusses the required notation, problem set up and reviews McCormick relaxations for bilinear and multi-linear terms. Section \ref{sec:algo} discusses a sequential bound tightening approach, formalizes the concepts and notation for piecewise relaxations of McCormick envelopes, and provides a detailed discussion on multivariate dynamic partitioning algorithm on multi-linear and monomial terms. Section \ref{sec:results} illustrates the strength of the proposed algorithms on benchmark MINLPs and Section \ref{sec:conclusions} concludes the paper.


\section{Problem definition}
\label{sec:problem}
\newcommand{\calP}{\boldsymbol{\mathcal{P}}}

\noindent\textbf{Notation:} Here, we use lower and upper case for vector and matrix entries, respectively. Bold font refers to the entire vector or matrix.
With this notation, $||\mathbf{v}||_2$ defines the $L_2$ norm of vector $\mathbf{v}\in\mathbb{R}^n$. Given vectors $\mathbf{v}_1 \in\mathbb{R}^n$ and $\mathbf{v}_2 \in\mathbb{R}^n$, $\mathbf{v}_1 \cdot \mathbf{v}_2 = \sum_{i=1}^n {v_1}_i {v_2}_i$; $\mathbf{v}_1 + \mathbf{v}_2$ implies element-wise sums; and $\frac{\mathbf{v}_1}{\alpha}$ denotes the element-wise ratio between entries of $\mathbf{v}_1$ and the scalar $\alpha$. Next, $z \in \mathbb{Z}^+$ represents a strictly positive integer scalar.  
$\mathbf{M} \in \mathbb{S}^{n\times n}$ represents a symmetric square matrix $\mathbf{M}$. Given variables $x_i$ and $x_j$, $\langle x_i,x_j\rangle^{MC}$, $\langle x_i,x_j\rangle^{UTMC}$ and $\langle x_i,x_j\rangle^{DTMC}$ denote the McCormick envelope, uniformly-partitioned McCormick envelope and dynamically-partitioned McCormick envelope, respectively. $(x_i^L,x_i^U)$ denotes the prescribed global lower and upper bound and $(x_i^l,x^u_i)$ denotes the tightened lower and upper bound, respectively. \\
 
\noindent\textbf{Problem} The problems considered in this paper are MINLPs, where the non-linearity is due to multi-linear (polynomial) functions. Often, these problems are not convex. The general form of the problem, denoted as $\calP_0$, is as follows: 

{\fontsize{9}{8}\selectfont
\begin{equation*}
\begin{aligned}
\calP_0: \ \ \ & \underset{\mathbf{x},\mathbf{y}}{\text{minimize}} & &  f(\mathbf{x},\mathbf{y},\mathbf{z}) \\
&\text{subject to} & & g(\mathbf{x},\mathbf{y},\mathbf{z}) \leq 0, \\
& & & h(\mathbf{x},\mathbf{y},\mathbf{z}) = 0, \\
& & & z_{K} = x_ix_j\ldots x_k, \ \ \forall K \in ML\\
& & & \mathbf{x}^L \leq \mathbf{x} \leq \mathbf{x}^U, \\
& & & \mathbf{y} \in \{0,1\}^m
\end{aligned}
\end{equation*}}

\noindent where, $f: \mathbb{R}^n \rightarrow \mathbb{R}$ is a scalar multi-linear function and $g: \mathbb{R}^n \rightarrow \mathbb{R}^{m_1}, h: \mathbb{R}^n \rightarrow \mathbb{R}^{m_2}$ are vector, multi-linear functions. $\mathbf{x},\mathbf{y}$ and $\mathbf{z}$ are vectors of continuous variables with box constraints, binary variables, and multi-linear functions, respectively. $z_K$ is the $K^{th}$ multilinear term in the set $ML$ such that $ML = \{K=(i,j,\ldots,k) | z_K = x_ix_j\ldots x_k\}$. When $i=j=\ldots=k$, the multi-linearity is reduced to monomial terms. 

\subsection{Standard convex relaxations for multi-linear terms}
\label{subsec:relaxations}
\noindent\textbf{McCormick relaxations} Given two variables, $x_i$,$x_j\in \mathbb{R}$ such that $x_i^l \leq x_i \leq x_i^u$ and $x_j^l \leq x_j \leq x_j^u$, we define the McCormick relaxation \cite{mccormick1976computability} of the bilinear product $x_ix_j$ as 
$\widehat{x_{ij}} \in \langle x_i,x_j \rangle^{MC}$ such that $\widehat{x_{ij}}$ satisfies
\vspace{-0.4cm}
{\fontsize{9}{8}\selectfont
\begin{subequations}
\begin{align}
   \widehat{x_{ij}} &\geq x_i^l x_j + x_j^l x_i - x_i^lx_j^l\\ 
   \widehat{x_{ij}} &\geq x_i^u x_j + x_j^u x_i - x_i^ux_j^u\\ 
   \widehat{x_{ij}} &\leq x_i^l x_j + x_j^u x_i - x_i^lx_j^u\\ 
   \widehat{x_{ij}} &\leq x_i^u x_j + x_j^l x_i - x_i^ux_j^l 
\end{align}
\label{eq:mcc}
\end{subequations}}

\noindent
The relaxations in \eqref{eq:mcc} are exact when one of the variables involved in the product is a binary variable. Further, relaxations in \eqref{eq:mcc} can be reduced to a simpler form (three constraints) when both the variables involved in the product are binary variables. If $y_i$ and $y_j$ are binary variables, we denote this simplified relaxation as $ \widehat{y_{ij}} \in \langle y_i,y_j \rangle^{BMC}$.
\\ 

\noindent\textbf{Successive McCormick relaxations of multi-linear terms}
Given a multi-linear term \\$x_ix_j
\ldots x_k$ with $k$-linear terms, we use a general technique for successively deriving McCormick envelopes on bilinear combinations of the terms. 
As discussed in \cite{cafieri2010convex}, the tightness of McCormick relaxations depends on the grouping order of bilinear terms. 
Here, we assume a lexicographic order of grouping the bilinear terms. For example, given a multi-linear term $(x_1x_2x_3x_4)$, the successive ordering of bilinear terms is $(((x_1x_2)x_3)x_4)$. More formally, for $k$-linear terms, the McCormick envelope of $x_ix_j\ldots x_{k-1}x_k$ is represented as 
$$\langle x_ix_j\ldots x_{k-1} x_k\rangle^{MC} = \langle\langle\langle x_ix_j\rangle^{MC}\ldots x_{k-1}\rangle^{MC} x_k \rangle^{MC}.$$ Study of alternate grouping choices is beyond the scope of this paper and is a topic of future work.


\section{CP-DTMC Algorithm}
\label{sec:algo}

The Constraint Programming with Dynamic Tightening of McCormicks (CP-DTMC) algorithm is described in this section. It combines CP based domain tightening with a partitioning scheme for McCormick relaxations.

\subsection{Sequential bound tightening procedure}
\label{subsec:CP}
\vspace{-0.3cm}
The first stage of CP-DTMC tightens the bounds of the continuous variables of $\calP_0$.
In many engineering applications there is little or no information about the upper and lower bounds ($\mathbf{x}^L,\mathbf{x}^U$) of these variables. Even when known, the gap between the bounds is often large. 
As discussed earlier, these bounds are used in McCormick relaxations to derive convex envelopes of  multi-linear terms in $\calP_0$. 
Large bounds generally weaken these relaxations, degrade the quality of the lower bounds, and slow the convergence of branch-and-cut algorithms. 
In practice, replacing the original bounds with tighter bounds can (sometimes) dramatically improve the quality of these relaxations (see Figure \ref{fig:mcc_region}[a]).

The basic idea of bound tightening is to derive (new) valid bounds to improve the relaxations. Our approach is
based on the work \cite{castro2015tightening} and is related to the iterative bound tightening of \cite{coffrin2015strengthening}. Let $x_{i}, i=1,\ldots,n$ be the element-wise entries of a continuous variable vector $\mathbf{x} \in \mathbb{R}^n$. In order to shrink the bounds of $x_{i}$, we solve a modified version of $\calP_0$. For each $x_{i}$, we first solve $\calP_0$ where we minimize 
$x_{i}$ and then solve $\calP_0$ where we maximize $x_{i}$. In both cases we add a constraint that bounds the original objective function of $\calP_0$ with a best known feasible solution
$(\mathbf{x}^*_{loc},\mathbf{y}^*_{loc},\mathbf{z}^*_{loc})$. This is a key difference between our approach and \cite{castro2015tightening}, \cite{coffrin2015strengthening}.
We also iteratively tighten the domain (bounds) of the variables using the approach above.
While there are other CP propagation methods that could be used to further improve the quality of the bounds, this method was sufficient to demonstrate the effectiveness of the overall approach.

More formally, Algorithm \ref{Algo:bound} describes the first stage of CP-DTMC. Line 1 takes as input the current bounds and a feasible solution. 
The core of the algorithm is embedded in Line 4. This is where we solve the variations of  $\calP_0$. Line 4a states the minimization and maximization of
$x_i$. Line 4b adds a bound on the original objective function. Lines 4c-4f state the rest of $\calP_0$.
Based on these solutions, we update the bounds of our variables (line 5).  The procedure continues until the bounds do not change (line 2).
Algorithm \ref{Algo:bound} is naturally parallel as each MILP of line 4 is independently solvable.

\begin{algorithm}[h]
\footnotesize
\caption{Sequential bound tightening on $\mathbf{x}$ vector}
\label{Algo:bound}
\begin{algorithmic}[1]
\STATE Input: $\mathbf{x}^l \gets \mathbf{x}^L, \mathbf{x}^u \gets \mathbf{x}^U$,$\mathbf{x}^l_{iter} = \mathbf{x}^u_{iter} \gets \mathbf{0}$, $\mathbf{x}^*_{loc},\mathbf{y}^*_{loc},\mathbf{z}^*_{loc}$, $TOL>0$.  
\WHILE{$||\mathbf{x}^l - \mathbf{x}^l_{iter}||_2 > TOL$ and $||\mathbf{\mathbf{x}}^u - \mathbf{x}^u_{iter}||_2 > TOL$}
\STATE $\mathbf{x}^l_{iter} \gets \mathbf{x}^l, \ \mathbf{x}^u_{iter} \gets \mathbf{x}^u$
\STATE Solve:
\begin{equation*}
\begin{aligned}
x_{i}^{*l} := \underset{\mathbf{x},\mathbf{y}}{\min} \ x_{i}; \ \ &x_{i}^{*u} := \underset{\mathbf{x},\mathbf{y}}{\max} \ x_{i} \ \forall i=1,\ldots,n   & (a) \\
\text{subject to} \ \ &  f(\mathbf{x},\mathbf{y},\mathbf{z}) \leq f(\mathbf{x}^*_{loc},\mathbf{y}^*_{loc},\mathbf{z}^*_{loc}), & (b) \\
 & g(\mathbf{x},\mathbf{y},\mathbf{z}) \leq 0, & (c) \\
&  h(\mathbf{x},\mathbf{y},\mathbf{z}) = 0, & (d) \\
&  z_{K} = \langle x_ix_j\ldots x_k\rangle^{MC}, \ \ \forall K \in ML & (d) \\
&  \mathbf{x}^l_{iter} \leq \mathbf{x} \leq \mathbf{x}^u_{iter}, & (e) \\
&  \mathbf{y} \in \{0,1\}^m & (f) \\
\end{aligned}
\end{equation*}
\STATE $\mathbf{\mathbf{x}}^l \gets \mathbf{\mathbf{x}}^{*l}, \ \mathbf{\mathbf{x}}^u \gets \mathbf{\mathbf{x}}^{*u}$
\ENDWHILE
\STATE return $\mathbf{\mathbf{x}}^l,\mathbf{\mathbf{x}}^u$ (tightened bounds).
\end{algorithmic}	
\end{algorithm}

\subsection{Algorithm for global optimization of MINLPs}
\label{subsec:algo}
The second stage of CP-DTMC derives piecewise McCormick relaxations of multi-linear terms based on
multivariate dynamic partitioning.
In practice, partitioning the bounds of the variables of the McCormick tightens the overall relaxation. As the number of partitions goes to $\infty$, partitioning exactly approximates the original multi-linear terms.
However, introducing a large number of partitions generally renders the problem intractable because the choice of partition is controlled by binary on/off variables. Thus, typical approaches assume a (small) finite number of partitions that uniformly discretize the multi-linear variables
\cite{grossmann2013systematic,bergamini2008improved,castro2015tightening,hasan2010piecewise}. While this is a straight-forward method for partitioning the domain of variables, it potentially creates partitions that correspond to solutions that are far away from the optimality region of the search space. In other words, many of the partitions are not useful. Instead, we develop
an approach that successively tightens 
the McCormick relaxations with sparse domain discretization. This approach focuses partitioning on areas of the variable domain that appear to influence optimality the most.  

\noindent\textbf{Lower bounds using piecewise McCormick relaxations}
Without loss of generality and for ease of explanation, we restrict the discussion of the lower bounding procedure to bilinear terms\footnote{This approach is easily extended to multi-linear terms using successive bilinear relaxations as discussed in section \ref{subsec:relaxations}.}. 
Given a bilinear term $x_ix_j$, we partition the domains of $x_i$ and $x_j$ into $M_i \in \mathbb{Z}^+$ and $M_j \in \mathbb{Z}^+$ disjoint regions with new binary variables $\hat{\mathbf{y}}_i \in \{0,1\}^{M_i}$ and $\hat{\mathbf{y}}_j \in \{0,1\}^{M_j}$ added to the formulation. The binary variables are used to control the partitions that are active and the tighter relaxation associated with the active partition.
Formally, the piecewise McCormick constraints for a bilinear term, denoted by $\widehat{x_{ij}} \in \langle x_i,x_j\rangle^{UTMC}$ (uniform partitioning) or $\widehat{x_{ij}} \in \langle x_i,x_j\rangle^{DTMC}$ (dynamic partitioning), take the following form: 
\vspace{-0.12cm}
\begin{subequations}
\begin{align}
   &\widehat{x_{ij}} \geq (\mathbf{x}_i^l\cdot\hat{\mathbf{y}}_i) x_j + (\mathbf{x}_j^l\cdot\hat{\mathbf{y}}_j) x_i - (\mathbf{x}_i^l\cdot\hat{\mathbf{y}}_i)(\mathbf{x}_j^l\cdot\hat{\mathbf{y}}_j)\\ 
   &\widehat{x_{ij}} \geq (\mathbf{x}_i^u\cdot\hat{\mathbf{y}}_i) x_j + (\mathbf{x}_j^u\cdot\hat{\mathbf{y}}_j) x_i - (\mathbf{x}_i^u\cdot\hat{\mathbf{y}}_i)(\mathbf{x}_j^u\cdot\hat{\mathbf{y}}_j)\\ 
   &\widehat{x_{ij}} \leq (\mathbf{x}_i^l\cdot\hat{\mathbf{y}}_i) x_j + (\mathbf{x}_j^u\cdot\hat{\mathbf{y}}_j) x_i - (\mathbf{x}_i^l\cdot\hat{\mathbf{y}}_i)(\mathbf{x}_j^u\cdot\hat{\mathbf{y}}_j)\\ 
   &\widehat{x_{ij}} \leq (\mathbf{x}_i^u\cdot\hat{\mathbf{y}}_i) x_j + (\mathbf{x}_j^l\cdot\hat{\mathbf{y}}_j) x_i - (\mathbf{x}_i^u\cdot\hat{\mathbf{y}}_i)(\mathbf{x}_j^l\cdot\hat{\mathbf{y}}_j) \\
   &\sum_{k=1}^{M_i} \hat{y_i}_k = 1, \ \ \sum_{k=1}^{M_j} \hat{y_j}_k = 1 \\
   &\hat{\mathbf{y}}_i \in \{0,1\}^{M_i}, \hat{\mathbf{y}}_j \in \{0,1\}^{M_j}
\end{align}
\label{eq:tmc}
\end{subequations}
\vspace{-0.2cm}

\noindent where, $(\mathbf{x}_i^l, \mathbf{x}_i^u) \in \mathbb{R}^{M_i}$ are the lower and upper bound vectors of variable $x_i$ for each partition.
In other words, for the $k^{th}$ partition of $x_i$, the following constraint defines the partition: ${x_i}^l_k \leq x_i \leq {x_i}^u_k$. Note that the bilinear terms in $\hat{\mathbf{y}}_j x_i$ and $\hat{\mathbf{y}}_i x_j$ are exactly linearized using standard McCormick relaxations. Also, $(\mathbf{x}_i^l\cdot\hat{\mathbf{y}}_i)(\mathbf{x}_j^l\cdot\hat{\mathbf{y}}_j)$ is rewritten as  $\mathbf{x}_i^l(\hat{\mathbf{y}}_i\hat{\mathbf{y}}_j^T)\mathbf{x}_j^l$, where $\hat{\mathbf{Y}}=(\hat{\mathbf{y}}_i\hat{\mathbf{y}}_j^T)$ is an $M_i\times M_j$ matrix with binary product entries. As discussed in section \ref{subsec:relaxations}, any binary product entry, $y_iy_j$, of $\hat{\mathbf{Y}}$ is exactly represented as $\langle y_i, y_j \rangle^{BMC}$. \\

\begin{figure}[htp]
   \centering
   \subfigure[Bilinear term $(x_ix_j)$]{
   \includegraphics[scale=0.34]{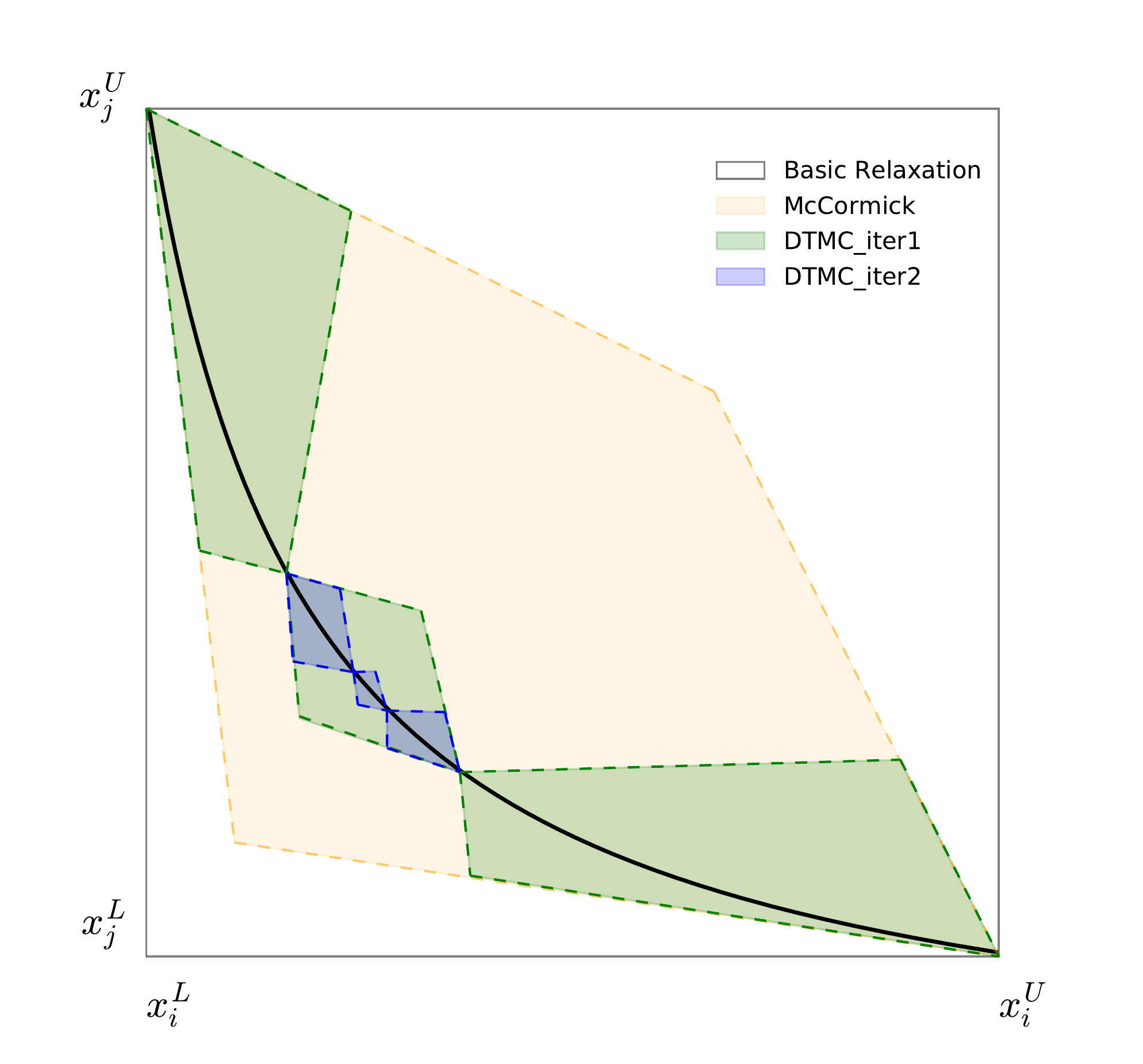}}
   \subfigure[Monomial term $(x_i^2)$]{
   \includegraphics[scale=1.3]{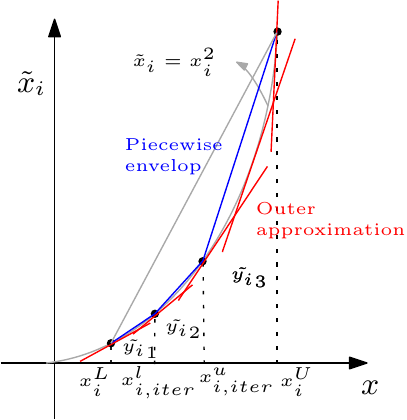}} 
   \caption{Feasible regions for bilinear and monomial (quadratic) terms based on DTMC.}
   \label{fig:mcc_region}
\end{figure}

\noindent \textbf{CP-DTMC algorithm for multi-linear terms} Given this model of piecewise McCormick relaxations, we can now formalize dynamically tightening of these relaxations. 
The pseudo-code of the DTMC algorithm is outlined in Algorithm \ref{Algo:partition}.  The full CP-DTMC algorithm combines Algorithm \ref{Algo:bound} with  Algorithm \ref{Algo:partition} and is described in Algorithm \ref{Algo:CPDTMC}.
We first discuss the dynamic partitioning scheme as outlined in Algorithm \ref{Algo:partition} followed by the discussion of Algorithm \ref{Algo:CPDTMC}. 

We first define $\boldsymbol{P}^*_{iter}$ as a vector of active partitions whose dimension is equal to $|\mathbf{x}|$. For any variable $x_i$, an active partition contains a lower bound and an upper bound for $x_i$
The choice of the active partition of $x_i$ is the binary variable of vector $\hat{\mathbf{y}}_i$ whose component is equal to 1.0. As shown in line 3 of  Algorithm \ref{Algo:partition}, the size of the partition  
is dependent on the size of the active partition of the current solution $\mathbf{x}^*_{iter}$. 
The parameter, $\Delta$, is used to scale the partition's size and it influences the convergence speed and the number of partitions.  
Lines 4-10 ensure that the partition's size is greater than a prescribed tolerance and that the partition lies within the contracted bounds. 
\begin{figure}[htp]
    \centering
    \includegraphics[scale=0.98]{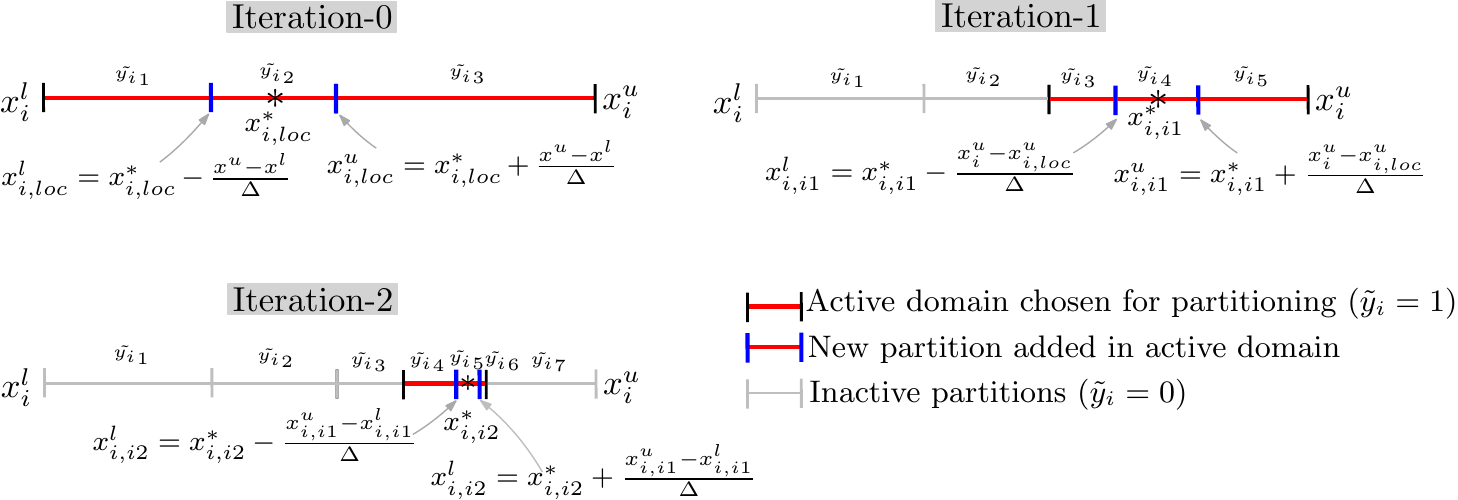}
    \caption{Dynamic partitioning of
    variable $x_i$ as described in Algorithm \ref{Algo:partition}}
    \label{fig:partitions}
\end{figure}

\noindent 
In Algorithm \ref{Algo:CPDTMC}, lines 1-3 execute Algorithm \ref{Algo:bound} to tighten the bounds using the feasible solution $(\mathbf{x}^*_{loc},\mathbf{y}^*_{loc},\mathbf{z}^*_{loc})$. Interestingly, on some MINLPs, this process shrank the gap between the upper and lower bounds on some variables to 0. Line 5 initializes the tightened bound domains as the active partitions as illustrated in ``iteration-0" of Figure \ref{fig:partitions}. Lines 6-12 
iteratively add dynamic partitions around the current solution $\mathbf{x}^*_{iter}$. Iterations 1 and 2 of Figure \ref{fig:partitions} clearly illustrate the partitioning scheme employed in this algorithm. 
The iterations stop (line 6) when a) the normalized improvement of the lower bound is less than $TOL_{imp}$, b) $\mathbf{x}^*_{iter}$ remains in the same partitions and the size of the partitions is $\le \boldsymbol{\epsilon}$, or c) the computation hits a time limit. In Figure \ref{fig:partitions}, the third iteration terminates if the $x^*_{i,i3}$ remains in partition $[x^l_{i,i2},x^u_{i,i2}]$ and its size is less than $\epsilon_i$. Figure \ref{fig:mcc_region}(a) is a geometric example of a DTMC iteration applied to a bilinear term. This figure illustrates how the area enclosed by the convex relaxations decreases as partitions are applied.

\begin{algorithm}[htp]
\footnotesize
\caption{Dynamic partitioning of variable domains}
Notation: Let $\boldsymbol{P}^*_{iter}$ represent a vector of active partitions for variable vector $\mathbf{x}$. $\mathbf{x}^l({\boldsymbol{ P}^*_{iter}})$ and $\mathbf{x}^u({\boldsymbol{ P}^*_{iter}})$ represent the vectors of lower and upper bounds of the active partitions of  $\mathbf{x}$ respectively.
\label{Algo:partition}
\begin{algorithmic}[1]
\STATE Input: $\mathbf{x}^l,\mathbf{x}^u, \mathbf{x}^*_{iter}$, $\boldsymbol{P}^*_{iter}$, $\boldsymbol{\epsilon} > 0$, $\boldsymbol{P}^*_{new} = \emptyset$, $\Delta > 0$
\STATE $\mathbf{lb} \gets \mathbf{x}^l({\boldsymbol{ P}^*_{iter}})$, $\mathbf{ub} \gets \mathbf{x}^u({\boldsymbol{ P}^*_{iter}})$
\STATE Evaluate the size of new partition $$\mathbf{l}_{iter} = \frac{\mathbf{ub} - \mathbf{lb}}{\Delta}$$
\IF{$\mathbf{l}_{iter} > \boldsymbol{\epsilon}$}
\STATE $\mathbf{v}^l \gets \max(\mathbf{x}^l,(\mathbf{x}^{iter} - \mathbf{l}_{iter})), \mathbf{v}^u \gets \min(\mathbf{x}^u,(\mathbf{x}^{iter} + \mathbf{l}_{iter}))$
\STATE $\boldsymbol{P}^*_{new} \gets \{ (v^l_i,v^u_i), \ \forall i=1,\ldots,n \}$
\STATE return $\boldsymbol{P}^*_{new}$
\ELSE
\STATE return $\emptyset$
\ENDIF
\end{algorithmic}	
\end{algorithm}

\begin{algorithm}[htp]
\footnotesize
\caption{An algorithm for global optimization of MINLPs (CP-DTMC)}
\label{Algo:CPDTMC}
\begin{algorithmic}[1]
\STATE Input: MINLP, $TOL_{imp}>0$
\STATE Obtain local solution $(\mathbf{x}^*_{loc},\mathbf{y}^*_{loc},\mathbf{z}^*_{loc})$ for the given MINLP
\STATE Execute Algorithm \ref{Algo:bound} $(\mathbf{x}^*_{loc},\mathbf{y}^*_{loc},\mathbf{z}^*_{loc})$ to calculate bounds $(\mathbf{x}^l,\mathbf{x}^u)$ on variables $\mathbf{x}\in \mathbb{R}^n$ appearing in multi-linear terms.
\STATE $\mathbf{x}^*_{iter} \gets \mathbf{x}^*_{loc},\mathbf{y}^*_{iter} \gets \mathbf{y}^*_{loc}$
\STATE $\boldsymbol{P}^*_{iter} \gets \{(x_i^l,x_i^u), \ \forall  i=1,\ldots, n\}$ (Initialize the active partitions with the entire domains of variables)
\WHILE{Stopping criterion not satisfied}
\STATE  For the current $\mathbf{x}^*_{iter}$ and $\boldsymbol{P}^*_{iter}$, obtain $\boldsymbol{P}^*_{new}$ from Algorithm \ref{Algo:partition}. 
\STATE $\boldsymbol{P}^*_{iter} \gets (\boldsymbol{P}^*_{iter} \cup \boldsymbol{P}^*_{new})$ (updated partitions for DTMC in line 9)
\STATE Solve
\begin{equation*}
\begin{aligned}
\calP_{iter}: \ \ \ & \underset{\mathbf{x},\mathbf{y}}{\text{minimize}} & &  f(\mathbf{x},\mathbf{y},\mathbf{z}) \\
&\text{subject to} & & g(\mathbf{x},\mathbf{y},\mathbf{z}) \leq 0, \\
& & & h(\mathbf{x},\mathbf{y},\mathbf{z}) = 0, \\
& & & z_{K} = \langle x_ix_j\ldots x_k\rangle^{DTMC}, \ \ \forall K \in ML\\
& & & \mathbf{x}^l \leq \mathbf{x} \leq \mathbf{x}^u, \\
& & & \mathbf{y} \in \{0,1\}^m
\end{aligned}
\end{equation*}
\STATE Let $(\mathbf{x}^*_{iter},\mathbf{y}^*_{iter})$ be the solution to $\calP_{iter}$
\STATE Update the vector of active partition sets $\boldsymbol{P}^*_{iter}$ such that the binary variable $\hat{y}^*_i$ on $x_i$ is equal to 1.0.
\ENDWHILE
\STATE Output: Global optimum solution $(\mathbf{x}^*_{opt},\mathbf{y}^*_{opt})$ or a lower bound (if solver times out) to the MINLP. 
\end{algorithmic}	
\end{algorithm}

\noindent \textbf{CP-DTMC Generalization} It is important to note that this approach can be applied to other types of relaxations. For example, consider monomials whose powers contain positive integer exponents $(\geq 2)$.
Without loss of generality\footnote{In the case of higher order monomials, i.e., $x_i^5$, we apply a reduction of the form $x_i^2x_i^2x_i \Rightarrow \tilde{x_i}^2x_i \Rightarrow \tilde{\tilde{x_i}}x_i$.}, we assume the monomial takes the form $x_i^2$.
We once again partition the domain of $x_i$ into $N_i \in \mathbb{Z}^+$ disjoint regions. Let $\tilde{\mathbf{y}_i} \in \{0,1\}^{N_i}$ be the binary variables added to the formulation. Formally, this piecewise convex relaxation, denoted by $\tilde{x_i} \in \langle x_i\rangle^{DTMC-q}$, takes the form: 
\vspace{-0.15cm}
\begin{subequations}
\begin{align}
   &\tilde{x_i} \geq x_i^2, \\ 
   &\tilde{x_i} \leq  \left((\mathbf{x}_i^l\cdot\tilde{\mathbf{y}}_i) + (\mathbf{x}_i^u\cdot\tilde{\mathbf{y}}_i)\right) x_i - (\mathbf{x}_i^l\cdot\tilde{\mathbf{y}}_i)(\mathbf{x}_i^u\cdot\tilde{\mathbf{y}}_i)\\ 
   &\sum_{k=1}^{N_i} \tilde{y_i}_k = 1 \\
   &\tilde{\mathbf{y}_i} \in \{0,1\}^{N_i}
\end{align}
\label{eq:qc}
\end{subequations}

\noindent
Note that $(\mathbf{x}_i^l\cdot\tilde{\mathbf{y}}_i)(\mathbf{x}_i^u\cdot\tilde{\mathbf{y}}_i)$ can be rewritten as  $\mathbf{x}_i^l(\tilde{\mathbf{y}}_i\tilde{\mathbf{y}}_i^T)\mathbf{x}_i^u$, where $\tilde{\mathbf{Y}}=(\tilde{\mathbf{y}}_i\tilde{\mathbf{y}}_i^T)$ is an $N_i\times N_i$ symmetric matrix with binary product entries (squared binaries on diagonal). Hence it is sufficient to linearize the entries of the upper triangular matrix with exact representations as discussed in section \ref{subsec:relaxations}. This relaxation is then directly introduced into Algorithm \ref{Algo:CPDTMC}. The only modification is to supplement the convex envelops in $\calP_{iter}$ with these monomial terms.
\begin{lem}
Given a finite number of partitions on $x_i$, the piecewise convex relaxation of $\langle x_i\rangle^{DTMC-q}$ is strictly tighter than $\langle x_i,x_i\rangle^{DTMC}$.
\end{lem}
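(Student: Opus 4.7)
The plan is to compare the two relaxations partition by partition. Both constructions share the same partition structure on $x_i$: the relaxation $\langle x_i\rangle^{DTMC-q}$ in (4) uses binaries $\tilde{\mathbf{y}}_i \in \{0,1\}^{N_i}$ with exactly one entry equal to 1, and the relaxation $\langle x_i,x_i\rangle^{DTMC}$ in (3) (instantiated with $j=i$ and the same binary vector) uses $\hat{\mathbf{y}}_i$ with a single active entry. For any fixed active partition $[x_i^l,x_i^u]$ I will show that, on the $(x_i,\tilde{x}_i)$-plane, the slice of $\langle x_i\rangle^{DTMC-q}$ is strictly contained in the slice of $\langle x_i,x_i\rangle^{DTMC}$; the full containment then follows by taking the union over the $N_i$ partitions, which are identical on both sides.

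For the upper envelope, fixing $j=i$ and an active partition in (3) collapses both McCormick upper bounds (3c) and (3d) to the single secant inequality
\[
\widehat{x_{ii}} \;\leq\; (x_i^l + x_i^u)\, x_i \;-\; x_i^l x_i^u,
\]
which is exactly the DTMC-q upper bound in (4b). Hence the two relaxations agree on the upper side. For the lower envelope, constraints (3a) and (3b) reduce to
\[
\widehat{x_{ii}} \geq 2 x_i^l x_i - (x_i^l)^2 \qquad \text{and} \qquad \widehat{x_{ii}} \geq 2 x_i^u x_i - (x_i^u)^2,
\]
which are precisely the tangent lines to the parabola $y=x_i^2$ at the partition endpoints. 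Since $x_i^2$ is strictly convex, each tangent lies on or below the parabola with strict inequality away from the tangent point. Therefore the DTMC-q lower constraint $\tilde{x}_i \geq x_i^2$ implies both DTMC lower constraints, while at every interior point $x_i\in(x_i^l,x_i^u)$ distinct from the endpoints there is an open vertical interval of values of $\tilde{x}_i$ feasible for DTMC but infeasible for DTMC-q. This yields strict containment on every partition of positive width, and hence strict tightness of the overall relaxation.

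The main obstacle will be bookkeeping rather than algebra: one must verify that the partition-wise argument lifts correctly to the full disjunctive formulation that includes the binary activation constraints (3e)--(3f) and (4c)--(4d). Since those activation constraints have the same combinatorial structure on both sides and each admissible assignment of $\hat{\mathbf{y}}_i$ (respectively $\tilde{\mathbf{y}}_i$) selects exactly one partition, the inclusion lifts in a straightforward way. One small subtlety is to note that the DTMC formulation linearizes products of binaries via $\langle\cdot,\cdot\rangle^{BMC}$; because only one binary per variable is active, these products reduce to the chosen partition's constants, so the comparison really does reduce to the slice-wise argument above.
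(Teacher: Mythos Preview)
Your proposal is correct and follows essentially the same approach as the paper: both arguments observe that, once a partition is activated, the DTMC upper bounds collapse to the secant of (4b) while the DTMC lower bounds become the two tangent lines to $x_i^2$ at the partition endpoints, which are dominated by the convex constraint $\tilde{x}_i \geq x_i^2$ with strict slack in the interior. The paper phrases the lower-bound comparison globally (the SOC is the intersection of infinitely many tangent half-spaces, of which DTMC retains only $N_i+1$), whereas you argue it slice by slice and then lift via the shared activation structure; these are equivalent presentations of the same idea, and your treatment of the lifting and the $\langle\cdot,\cdot\rangle^{BMC}$ linearization is in fact more explicit than the paper's.
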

\begin{proof}
For a given, finite number of partitions, $N_i$, on variable $x_i$, $\langle x_i,x_i\rangle^{DTMC}$ reduces to the following three-inequalities representing the piecewise convex relaxations:
\begin{subequations}
\begin{align}
\label{eq:pf1}
    &\tilde{x_i} \geq  2(\mathbf{x}_i^l\cdot\tilde{\mathbf{y}}_i) x_i - (\mathbf{x}_i^l\cdot\tilde{\mathbf{y}}_i)^2\\
    \label{eq:pf2}
    &\tilde{x_i} \geq  2(\mathbf{x}_i^u\cdot\tilde{\mathbf{y}}_i) x_i - (\mathbf{x}_i^u\cdot\tilde{\mathbf{y}}_i)^2\\ \label{eq:pf3}
   &\tilde{x_i} \leq  \left((\mathbf{x}_i^l\cdot\tilde{\mathbf{y}}_i) + (\mathbf{x}_i^u\cdot\tilde{\mathbf{y}}_i)\right) x_i - (\mathbf{x}_i^l\cdot\tilde{\mathbf{y}}_i)(\mathbf{x}_i^u\cdot\tilde{\mathbf{y}}_i) \\ 
   &\sum_{k=1}^{N_i} \tilde{y_i}_k = 1, \ \tilde{\mathbf{y}_i} \in \{0,1\}^{N_i} \nonumber
\end{align}
\label{eq:dtmc_proof}
\end{subequations}

\noindent
Clearly, inequalities \eqref{eq:pf1} and \eqref{eq:pf2} are under estimators of $x_i^2$ at grid points $x_i^l,i=1,\ldots,N_i$ and $x_{N_i}^u$ respectively. The over estimator in \eqref{eq:pf3} is same as the over estimator defining $\langle x_i\rangle^{DTMC-q}$. Further, the second-order conic under estimator of $\langle x_i\rangle^{DTMC-q}$ can be equivalently represented with infinitely many linear inequalities. However, as discussed above, the under estimators in $\langle x_i,x_i\rangle^{DTMC}$ are finite ($N_i+1$), thus relaxing the second-order-cone. Therefore, $\langle x_i\rangle^{DTMC-q} \subset \langle x_i,x_i\rangle^{DTMC}$. 
\end{proof}

Because of Lemma 31, we use this relaxation rather than McCormick on monomial terms.
However, using this relaxation forced us to introduce a technical subtlety into the algorithm implementation.
While constraint $\tilde{x_i} \geq x_i^2$ in \eqref{eq:qc} is a convex, second order cone (SOC), 
several moderately sized problems were difficult to solve, even with modern,
state-of-the-art solvers (CPLEX). Either the solver convergence was very slow or they terminated with a numerical error. To circumvent this issue, we implemented a cutting-plane approach for these constraints. This approach relaxes the SOC constraint with a finite number of valid cutting planes (first order derivatives), produces an outer envelop, and produces a lower bound on the optimal solution. This lower bound is tightened for every violated SOC constraint by adding the corresponding valid cutting plane until a solution obtained is feasible, and hence optimal, for the original SOC set. Figure \ref{fig:mcc_region}(b) illustrates the outer-approximation procedure. Red colored lines are the under estimators of $x_1^2$ and the valid cutting planes added to the formulation. In Algorithm \ref{Algo:CPDTMC}, this approach is used for the solve routine of line 9. We expect the need for this technical detail to diminish as conic solvers improve.

\noindent \textbf{TCP-DTMC - A hybrid approach} 
The main idea behind the TCP-DTMC approach is to combine the sequential bound tightening procedure in Algorithm \ref{Algo:bound} with a three-partition piecewise McCormick relaxation on every variable in multi-linear terms.
Since we know $\mathbf{x}^*_{loc}$ from a local solver, we discretize the domain with atmost three partitions and satisfy the rules of partitioning as described in Algorithm \ref{Algo:partition}. Therefore, in line 4(d) of Algorithm \ref{Algo:bound}, the McCormick relaxations are replaced by 
$$z_{K} = \langle x_ix_j\ldots x_k\rangle^{DTMC}, \ \ \forall K \in ML.$$
with an additional constraint, 
$$\sum_{k=1}^{3} \tilde{y_i}_k = 1 \ \forall i=1,\ldots,|\mathbf{x}|.$$
The primary intuition behind this bound tightening procedure is to obtain tighter bounds around the local solution and possibly converge the bounds to near-optimum solutions in the initial step.


\section{Computational results}
\label{sec:results}

\vspace{-.3cm}

All computations were performed using the high performance computing resources at Los Alamos National Laboratory (using nodes for parallel computation) with Intel(R) Xeon(R) CPU E5-2660 v3 @ 2.60GHz processors and 62GB of memory. All MILPs were solved using CPLEX 12.6.2 with default options and presolver switched on. All the outer-approximation cutting planes for quadratic terms were implemented as a CPLEX lazy cut callback.
BARON 15.2.0 (default options) was the global solver used to benchmark the performances of CP-DTMC and TCP-DTMC.  
Ipopt 3.12.4 and Bonmin 1.8.4 were used as the local NLP and MINLP solvers, respectively. These solvers were used to produce the initial feasible solution for Algorithm \ref{Algo:bound}. Table \ref{tab:parameters} summarizes the values of all the parameters used in CP-DTMC. The notation ``TO" is used to indicate when the algorithm timed out (time limit=3600 sec) and ``GOpt"
is used to indicate global optimum, i.e. the lower bound is within 0.0001\% of the known optimal solution. In Table \ref{tab:results}, Best$\Delta$ and Best$N$ correspond to the best solution found within the CPU limit for DTMC's $\Delta$ and UTMC's number of partitions, respectively. Also, in Table \ref{tab:results}, we define the following: 

{\fontsize{9}{8}\selectfont
\begin{align*}
    \mathrm{\% Gap} = \frac{f(\mathbf{x}_{opt}^*,\mathbf{y}_{opt}^*,\mathbf{z}_{opt}^*) - f(\mathbf{x}_{iter}^*,\mathbf{y}_{iter}^*,\mathbf{z}_{iter}^*)}{f(\mathbf{x}_{iter}^*,\mathbf{y}_{iter}^*,\mathbf{z}_{iter}^*)}\times 100, \ \  
    \mathrm{\% BC} = \frac{\|\mathbf{x}^U-\mathbf{x}^L\|_2 - \|\mathbf{x}^u-\mathbf{x}^l\|_2}{\|\mathbf{x}^u-\mathbf{x}^l\|_2}\times 100
 \end{align*}}

\noindent 
In our numerical experiments we considered \textit{three} NLPs and \textit{thirteen} MINLPs that ranged from small, contrived examples to large-scale MINLP benchmark problems selected from MINLPLib 2 \cite{bussieck2003minlplib}. 
We chose problems whose nonlinearity is expressed with multi-linear terms. The MINLPs chosen for analysis purposes are not exhaustive and we will expand the test-bed in our future work. Table \ref{tab:data} summarizes the statistics of the test-bed including global optimum, number of constraints, binary variables, continuous variables and multi-linear terms. Note that ``nlp2" contains two, fourth degree monomial terms and ``eniplac" contains bilinear, quadratic and cubic monomials. In the case of the ``blend" instances, we partition only a single variable per bilinear term as these were large scale MINLPs\footnote{In the ``blend" instances, there were few binary variables that appeared in most of the bilinear terms. These are the variables chosen for partitioning}.

\vspace{-0.5cm}
\begin{table}[H]
    \centering
    \footnotesize
    \caption{Parameters used in CP-DTMC}
    \begin{tabular}{c|c}
    \toprule
         $N$ (number of partitions in UTMC) & 10, 20, 40 \\
         $\Delta$ (scaling parameter in DTMC/TCP) & 2, 4, 8, 10, 16, 32 \\
         Wall time execution limit & 3600.0 sec\\
         $\boldsymbol{\epsilon}$  (minimum partition length tolerance) & 0.001 \\
         $TOL$ (bound tightening tolerance) & 0.01 \\
         $TOL_{imp}$ (\% improvement tolerance in DTMC) & 0.001\% \\
    \bottomrule 
    \end{tabular}
    \label{tab:parameters}
\end{table}

\vspace{-1.2cm}
\begin{table}[H]
\caption{Problem Description}
\scriptsize
\centering
{%
\begin{tabular}{lccccc}
\toprule
Instance& GOpt& \#Constraints &\#BVars & \#CVars& \#ML \\
&&&& (\#CVars-discretized)&   \\
\hline
nlp1&	58.384&	3& 0&	2(2)&	3\\
nlp2&	0& 2&	0&	2(4)&	4 \\
nlp3&	7049.248& 14& 0&	8(8)&	5 \\
ex1223a&	4.580& 9&	4&	3(3)&	3 \\
ex1264&	8.6& 55&	68&	20(20)&	16 \\
ex1265&	10.3&  74& 100&	30(30)&	25 \\
ex1266&	16.3& 95& 138&	42(42)&	36 \\
fuel& 8566.119& 15& 3& 12(6)&	3 \\
meanvarx& 14.369& 44& 14& 21(7)& 28 \\
util& 4.305& 167&  28& 117(7)& 5 \\
eniplac& -132117.083& 189& 24& 117(24)& 66 \\
blend029& 13.359& 213& 36&	66(10)&	28\\ 
blend531& 20.039& 736& 104&	168(28)& 146 \\
blend718& 7.394& 606& 87& 135(20)& 100 \\
blend480& 9.227& 884& 124& 188(28)& 152 \\
blend146& 45.297& 624& 87& 135(20)& 104 \\
\bottomrule 
\end{tabular}}
\label{tab:data}
\end{table}%

\begin{figure}[h]
   \centering
   \subfigure[Mathematical formulations]{
   \includegraphics[scale=0.67]{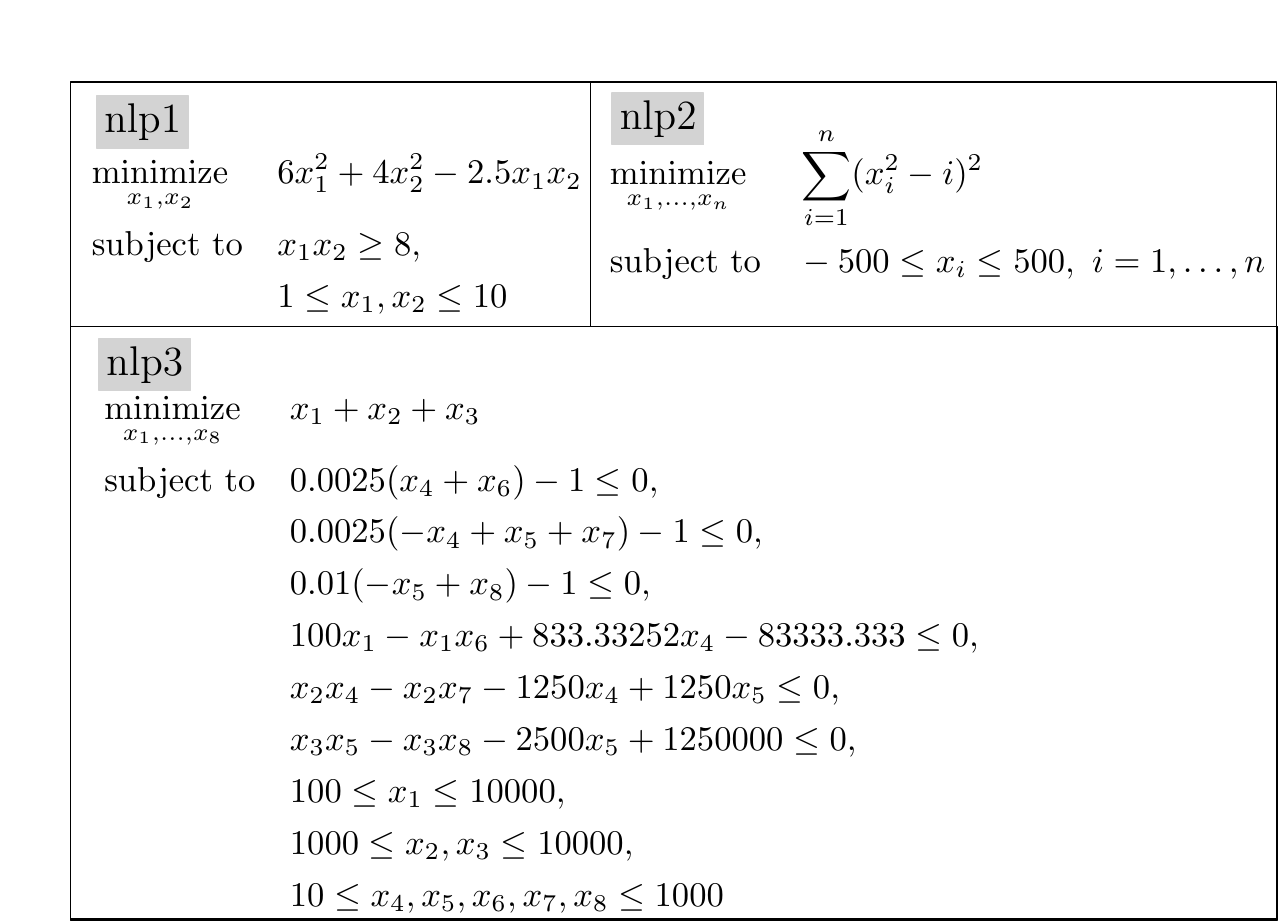}}
   \subfigure[nlp2 with multiple global minima and a local minimum]{
   \includegraphics[scale=0.27]{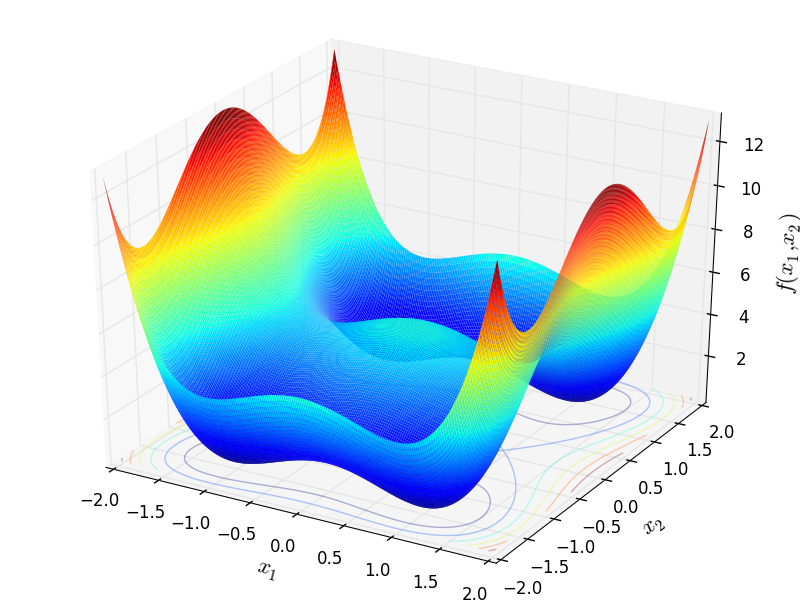}} 
   \caption{NLPs considered in this paper}
   \label{Fig:NLPs}
\end{figure}

\vspace{-0.5cm}

\subsection{NLPs}
We first consider a small set of simple NLPs, as described in Figure \ref{Fig:NLPs}(a) and \cite{castro2015tightening,kolodziej2013discretization,teles2013univariate}. We compare the performance of our algorithms with  BARON. These problems are interesting to discuss in more detail.
``nlp1", taken from \cite{castro2015tightening}, involves both bilinear and quadratic terms. ``nlp2" appears in applications related to electromagnetic inverse scattering problems \cite{jamil2013literature}. In this problem, quadrilinear terms in the objective and large bounds on the variables makes it particularly challenging for existing McCormick-relaxation based algorithms. For computational studies, we solve nlp2 in two dimensions $(n=2)$. As shown in Figure \ref{Fig:NLPs}, the objective function has multiple global minima at $\Mypm(1,\sqrt{2})$ and a local minimum at the origin. When solved with IPOPT we get
a local solution, $f^*_{loc}= 5$, at $(0,0)$. ``nlp3", taken from a standard test-suite \cite{hock1980test}, has five bilinear terms and large bounds on all the variables. Since this is a challenging problem for the equally partitioned, piecewise McCormick relaxations, this problem has been studied in detail in \cite{castro2015tightening,castro2015normalized,teles2013univariate}. 

\noindent \textbf{Computational Performance}
Table \ref{tab:results} summarizes the performance of the algorithms on the NLPs. On nlp1 and nlp2, the  algorithms performed consistently better than Baron. For nlp2, we observed that the quadratic convex envelopes, in conjunction with outer-approximation, performed computationally better than solving mixed-integer SOCs.   

\begin{table}[htp]
\caption{Contracted bounds after applying sequential tightened-CP algorithm on nlp3.}
\scriptsize
\centering
{%
\begin{tabular}{cccccccc}
\toprule
 & \multicolumn{2}{c}{Original bounds} & \multicolumn{2}{c}{TCP bounds} &
 \multicolumn{3}{c}{\#BVars added}\\
 \cmidrule(lr){2-3}
 \cmidrule(lr){4-5}
 \cmidrule(lr){6-8}
 Variable & $L$ & $U$ & $l$ & $u$ & DTMC & CP-DTMC & TCP-DTMC \\
 &&&&&($\Delta=4$)&($\Delta=10$)&($\Delta=10$) \\
 \midrule
$x_1$ &100 &10000 &573.1 &585.1 &14 &14 &3+3 \\
$x_2$ &1000 &10000 &1351.2 &1368.5 &14 &14 &3+3\\
$x_3$ &1000 &10000 &5102.1 &5117.5 &17 &15 &3+3\\
$x_4$ &10 &1000 &181.5 &182.5 &16 &15 &3+3\\
$x_5$ &10 &1000 &295.3 &296.0 &17 &15 &3+3\\
$x_6$ &10 &1000 &217.5 &218.5 &16 &15 &3+3\\
$x_7$ &10 &1000 &286.0 &286.9 &17 &15 &3+3\\
$x_8$ &10 &1000 &395.3 &396.0 &17 &15 &3+3\\
\cmidrule(r){1-8}
Total &&&&&128&118&48 \\
\bottomrule
\end{tabular}}
\label{tab:tighten}
\end{table}

We performed a detailed study of nlp3 as this problem has received considerable interest in the literature. Table \ref{tab:tighten} show the effectiveness of sequential tightened-CP (TCP) techniques when applied to nlp3. The initial large global bounds are tightened by {\it a few orders of magnitude} with the addition of three binary variables per continuous variable in the bilinear terms. This shows the value of combining the disjunctive polyhedral approximation around the initial feasible solution ($\mathbf{x}^*_{loc}$) with the bound tightening procedure. In Figure \ref{Fig:p3_BL} we also observe that the additional variables do not increase the overall run time too much.
More importantly, the reduction in the variable domains is \textit{substantial} using TCP. Finally, the jump in the run time after the first iteration in Figure \ref{Fig:p3_BL}(b) is due to the 
reduction in the initial bounds using the CP/TCP algorithm. 

\noindent \textbf{Parameter tuning} Table \ref{tab:delta} shows the performance of the algorithm on nlp3 for varying values of $\Delta$. It is clear that the solution time and the number of binary variables added in the DTMC algorithm depend on tuning this parameter. However, we note that the \% gap for all $\Delta \geq 4$ using TCP-DTMC were close to the optimal solution. For $\Delta=10$, the global optimum is found in 60 seconds with only 48 binary variables added to the formulation. Overall, for nlp3, it is important to note that the TCP-DTMC algorithm outperforms most of the state-of-the-art piecewise relaxation methods developed in the literature.

\begin{figure}[htp]
  \centering
  \subfigure[Tightened bounds after each iteration]{
  \includegraphics[scale=0.27]{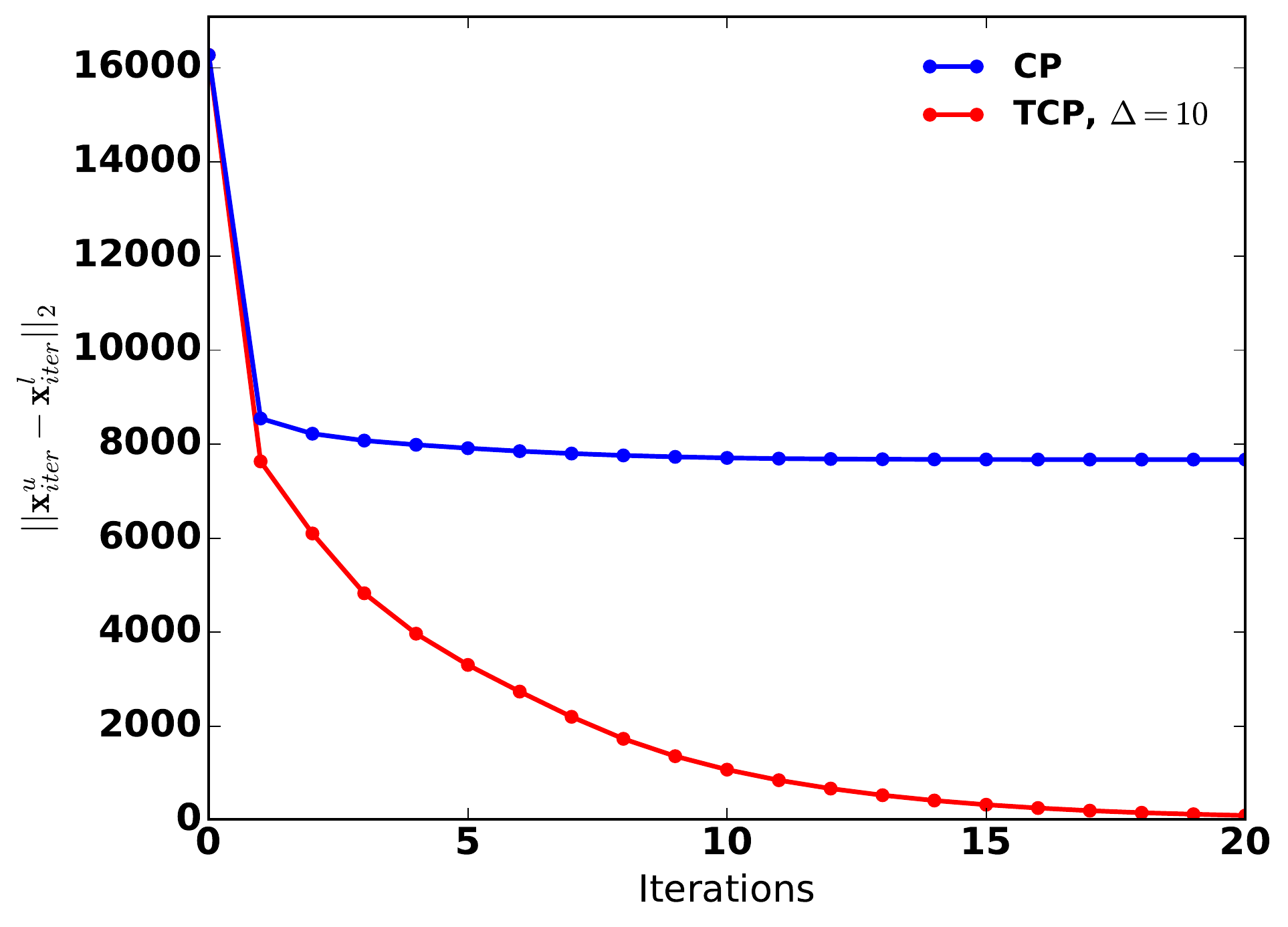}}       
  \subfigure[Elapsed time(sec) of bounds tightening]{
  \includegraphics[scale=0.27]{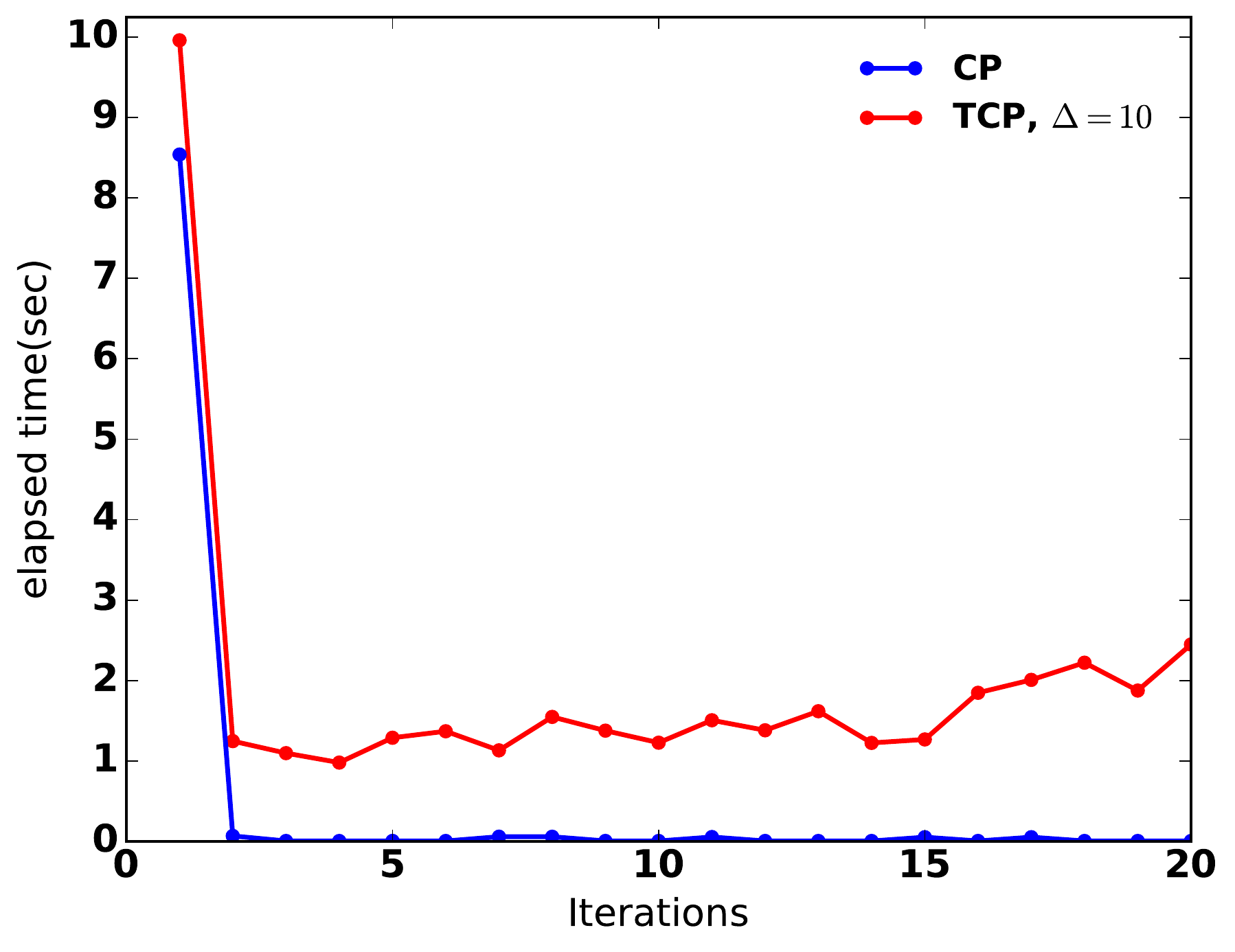}} 
  \caption{Performance of sequential CP and sequential tightened-CP on nlp3.}
  \label{Fig:p3_BL}
\end{figure}


\begin{table}[htp]
\caption{Performance of proposed algorithms on nlp3 for various $\Delta$ values.}
\scriptsize
\centering
{%
\begin{tabular}{cccccccccc}
\toprule
 & \multicolumn{3}{c}{DTMC} & \multicolumn{3}{c}{CP-DTMC} &
 \multicolumn{3}{c}{TCP-DTMC}\\
 \cmidrule(lr){2-4}
 \cmidrule(lr){5-7}
 \cmidrule(lr){8-10}
 $\Delta$ & \#BVars & T & \%Gap & \#BVars & T & \%Gap & \#BVars & T & \%Gap \\
 \midrule
 
2& 137& 92.91& 141.14& 116& 1393.31&	39.654&	160& TO& 5.148 \\
4& 128& TO&	0.013& 114&	TO&	0.032& 48& 44.44& 0.00064 \\
8& 116& TO&	0.065& 117&	TO&	0.009& 48& 50.31& 0.00014 \\
10&	118& TO& 0.052&	118& TO& 0.004&	48&	59.50&	GOpt \\
16&	117& TO& 0.092&	119& TO& 0.009&	48&	63.04&	0.00027 \\
32&	118& TO& 0.076&	120& TO& 0.03& 48&	90.09& 0.00029 \\

\bottomrule
\end{tabular}}
\label{tab:delta}
\end{table}


\subsection{MINLPs}
In this section we compare the algorithms on MINLP benchmark problems described in Table \ref{tab:results}. 

\noindent \textbf{Performance of DTMC without CP/TCP}
From Table \ref{tab:results} it is apparent that dynamically partitioning variable domains to tighten McCormick relaxations is efficient even without bound tightening (CP/TCP). DTMC outperformed the uniform partitioning approach (UTMC) in twelve out of thirteen problems. Problems ex1266, meanvarx and blend718 
show the biggest performance gains (UTMC even times out on meanvarx).
DTMC also outperforms Baron on ten out of thirteen MINLPs, in particular on eniplac, blend531 and blend718. Blend718 is noteworthy as Baron times out with a 27.5\% optimality gap but DTMC produces global optimum solution within 326.2 sec. 

\noindent \textbf{Performance of DTMC with CP/TCP}
In Table \ref{tab:results}, we observed a reduction in run times of the DTMC algorithm (T$_{DTMC}$) due to CP/TCP bound tightening (with few exceptions). The reductions are significant on the large-scale blend480 and blend518 problems. Specifically, after TCP, DTMC performs almost twice as fast as Baron on blend480. 
It is also noteworthy to compare the performance of CP-DTMC and TCP-DTMC with Baron on these problems. We observed that Baron timed out on blend718 and blend146 with 27.5\% and 4.039\% optimality gaps. However, for blend718, CP-DTMC and TCP-DTMC produce global optimum solutions within 488 sec and 208 sec, respectively. On blend146, CP-DTMC and TCP-DTMC timed out with smaller optimality gaps (0.043\% and 0.0570\%) than Baron and UTMC. On blend531, Baron finds the global optimum in 2349 sec, but CP-DTMC and TCP-DTMC find the global optimum in 157 sec and 392 sec - at least fifteen times faster. However, on blend480, while the performance of our algorithms was better than UTMC, it was not better than Baron. 

\begin{table}[htp]
\caption{Comparison of all algorithms}
\centering
{%
\begin{tabular}{lrrrrrrrrrr}
\specialrule{1.1pt}{1pt}{1pt}
\hline
 &&\multicolumn{2}{c}{BARON} &\multicolumn{3}{c}{UTMC} &\multicolumn{3}{c}{DTMC} \\
 \cmidrule(lr){3-4}
 \cmidrule(lr){5-7}
 \cmidrule(lr){8-10}
 Instance && \%Gap  & T  & Best$N$  &  \%Gap  & T$_{UTMC}$  & Best$\Delta$  &  \%Gap  & T$_{DTMC}$ \\
 \midrule
nlp1 & & GOpt & 4.42 & 40 & 0.091 & 12.74 & 32 & GOpt & 1.71 \\ 
nlp2&& GOpt&	4.19&	20&	GOpt&	0.07&	32&	GOpt&	0.07 \\
nlp3&& GOpt&	13.26&	40&	0.585&	TO&	4&	0.013&	TO \\
ex1223a&& GOpt&	4.26&	20&	GOpt&	0.02&	32&	GOpt&	0.01 \\
ex1264&&	GOpt&	13.84&	10&	GOpt&	50.62&	10&	GOpt&	1.97&	\\
ex1265&&	GOpt&	7.93&	10&	GOpt&	76.35&	8&	GOpt&	0.57& \\
ex1266&&	GOpt&	17.43&	10&	GOpt&	114.15&	2&	GOpt&	0.74\\
fuel&& GOpt&	4.38&	40&	GOpt&	1.09&	32&	GOpt&	0.40 \\
meanvarx&&	GOpt&	4.31&	40&	0.221&	TO&	8&	0.012&	90.64\\
util&& GOpt&	5.54&	40&	8.186&	6.94&	32&	0.0098&	8.21 \\
eniplac&& GOpt&	330.46&	10&	GOpt&	2.47&	32&	GOpt&	1.97 \\
blend029&& GOpt&	15.33&	10&	GOpt&	2.51&	32&	GOpt&	1.95 \\
blend531&& GOpt&	2348.08& 20& 0.045& 153.43&	8&	GOpt&	140.76\\
blend718&& 27.484& TO& 20&	GOpt& 1198.42&	16&	GOpt&	326.17 \\
blend480&& GOpt&	2044.22&	20&	0.2&	TO&	16&	0.125&	2478.27 \\
blend146&& 4.039& TO& 20& 0.58& TO& 16& 0.035& TO \\
\specialrule{0.9pt}{1pt}{1pt}
&\multicolumn{5}{c}{CP-DTMC} &\multicolumn{5}{c}{TCP-DTMC}\\
\cmidrule(lr){2-6}
\cmidrule(lr){7-11}
Instance & Best$\Delta$  & BC(\%)  & \%Gap  &T$_{CP}$ & T$_{DTMC}$ & Best$\Delta$  & BC(\%)  & \%Gap  &T$_{TCP}$ & T$_{DTMC}$\\
\midrule

nlp1 & 16 & 96.67 & GOpt & 8.96 & 1.18 & 16 & 98.89 & GOpt & 2.73 & 1.10 \\ 
nlp2&	10&	99.99&	GOpt&	8.73&	0.02&	32&	99.99&	GOpt&	0.34&	0.02 \\
nlp3&	10&	52.86&	0.004&	9.06&	TO&	10&	99.84&	GOpt&	59.00&	0.50 \\
ex1223a& 10& 99.00& GOpt& 6.31& 0.01& 10& 99.00& GOpt& 0.11& 0.01 \\
ex1264& 10&	39.72&	GOpt&	10.96&	1.48&	16&	40.56&	GOpt&	5.33&	1.74 \\
ex1265&	4&	23.74&	GOpt&	10.56&	0.64&	32&	23.74&	GOpt&	3.20&	0.72 \\
ex1266&	2&	82.29&	GOpt&	15.15&	0.02&	4&	82.29&	GOpt&	4.16&	0.34 \\
fuel& 4&	99.90&	GOpt&	6.95&	0.08&	4&	99.90&	GOpt&	0.14&	0.08 \\
meanvarx& 10&	67.28&	0.004&	6.50&	12.93&	4&	84.09&	0.0066&	8.26&	395.23 \\
util& 10&	99.99&	GOpt&	13.29&	0.47&	10&	99.99&	GOpt&	4.73&	0.83 \\
eniplac& 4&	19.15&	GOpt&	16.71&	49.83&	32&	19.15&	GOpt& 11.50&	5.56 \\

blend029& 32&	16.08&	GOpt&	15.73&	1.63&	10&	36.34&	GOpt&	4.76&	1.48 \\

blend531& 32&	6.91&	GOpt&	93.91&	63.77&	4&	9.48&	GOpt&	310.36&	82.09 \\

blend718& 16&	2.38&	GOpt&	52.07&	435.90&	32&	2.94&	GOpt&	28.46&	179.40\\

blend480&	16&	13.89&	0.092&	183.45&	1962.90&	16&	18.47&	0.097&	1014.47&	1029.00 \\

blend146& 32&	0.16&	0.043&	63.71&	TO&	8&	0.45&	0.057&	30.64&	TO \\
\hline
\specialrule{1.1pt}{1pt}{1pt}
\end{tabular}}
\label{tab:results}
 \end{table}

\noindent \textbf{Performance of CP/TCP}
Commonly in optimization adding extra binary variables increases problem complexity. However, in Table \ref{tab:results}, we observed that the run times for TCP were faster on twelve out of sixteen (including NLPs) instances. Blend480 was an exception, where TCP was almost five times slower than CP. Blend480 is one of the harder MINLPs; it has a large number of binary variables and constraints. From a total domain reduction (BC\%) perspective, the advantages of TCP are evident in Table \ref{tab:results}. Nlp3, meanvarx and blend029 have the largest reduction. The small BC\% values on ``blend" problems are due to variable bounds that are tight to begin with.

\noindent \textbf{Performance of convex relaxations on monomials}
Table \ref{tab:quad_relax} describes the performance of the algorithms when McCormick relaxations ($\langle x,x\rangle^{DTMC}$) are applied to monomial terms.
These results are compared with the tighter convex relaxations ($\langle x\rangle^{DTMC-q}$) of Table \ref{tab:results}. The run times of DTMC with tighter convex relaxations are faster on all the instances (best on eniplac). Moreover, the total reduction in bounds on variables during CP/TCP steps are up to 11\% larger using tighter convex relaxations. 
\vspace{-0.5cm}

\begin{table}
\footnotesize
\centering
\caption{Performance of algorithms with basic McCormick relaxations on higher-order monomials.}
\begin{tabular}{lrrrrrrrrrr} \toprule
Instances with  &\multicolumn{2}{c}{DTMC}
  &\multicolumn{4}{c}{CP-DTMC} & \multicolumn{4}{c}{TCP-DTMC}\\
  \cmidrule(lr){2-3}
  \cmidrule(lr){4-7}
  \cmidrule(lr){8-11}
  monomials & \%Gap &T$_{DTMC}$ & BC(\%)  & \%Gap  &T$_{CP}$ & T$_{DTMC}$ & BC(\%)  & \%Gap  &T$_{TCP}$ & T$_{DTMC}$\\
\midrule
  nlp1&0.0002&0.86&96.67&0.0002&8.16&0.98&98.89& 0.00013&1.64&0.37\\
  nlp2&GOpt&13.50&99.99&GOpt&7.99&2.49&99.99&GOpt&0.31&3.74\\
  ex1223a&0.0002&2.16&99.00&0.0001&5.84&0.31&99.00&0.0001&0.79&0.12\\
  fuel&GOpt&1.48&99.82&GOpt&6.45&0.20&99.90&GOpt&3.49&0.21\\
  meanvarx&0.012&755.86&67.28&0.0097&6.43&382.66&74.08&0.0077&6.63&453.31\\
  eniplac&0.0012&350.94&7.68&GOpt&20.31&2662.94&17.26&GOpt&29.98&68.21\\
\bottomrule
\end{tabular}
\label{tab:quad_relax}
\end{table}

\vspace{-1.2cm}

\section{Conclusions}
\label{sec:conclusions}

\vspace{-.2cm}

In this work, we developed an approach for dynamically partitioning McCormick relaxations of multi-linear terms in MINLPs. This is a class of well-known, hard, non-convex optimization problems, where the lower bounds from these relaxations can be arbitrarily bad. We show that a dynamic partitioning of the domains of variables outperforms uniform partitioning and leads to a significantly smaller number of binary variables. We also show that CP techniques, such as bound contraction, can be applied in conjunction with dynamic partitioning to improve 
convergence drastically. 
Our numerical experiments suggest that the initial bounds of many benchmark problems are unnecessarily loose, lead to solver scalability issues, and result in poor relaxations. 

Finally, we emphasize that the algorithm presented in this paper is by no means exhaustive and there are a number of interesting directions for future research. First, the concept of dynamic partitioning could be combined with tighter convex over and under estimators for nonlinear functions and further improve the quality of the relaxations. Second, we only applied the bound tightening procedure at the root node. We could further apply it at sub nodes not unlike how \cite{mouret2009tightening} applies McCormick tightening. Third, there are CP propagation techniques that could be applied to further tighten variable domains. Finally, we could also improve the overall quality of the McCormick relaxations by using different orderings of variables in multi-linear terms. \\


\noindent \textbf{Acknowledgements} The work was funded by the Center for Nonlinear Studies (CNLS) and was carried out under the auspices of the NNSA of the U.S. DOE at LANL under Contract No. DE-AC52-06NA25396. 


\bibliographystyle{splncs03}
\bibliography{references.bib}

\end{document}